\newtheorem{remark}{\textbf{Remark}}
\newtheorem{assumption}{\emph{\textbf{Assumption}}}
\newtheorem{theorem}{\textbf{Theorem}}
\newtheorem{lemma}{\textbf{Lemma}}
\newtheorem{problem}{Problem}
\DeclareMathOperator{\rank}{rank}
\DeclareMathOperator{\diag}{diag}
\newcommand*{\QEDA}{\hfill\ensuremath{\square}}   
\title{\LARGE \bf
	Learning Distributed Stabilizing Controllers for Multi-Agent Systems
}
\author{Gangshan Jing,~He Bai,~Jemin George,~Aranya Chakrabortty~and~Piyush K. Sharma
	\thanks{G.~Jing and A. Chkarabortty are with  North Carolina State University, Raleigh, NC 27695, USA.
		{\tt\small \{gjing, achakra2\}@ncsu.edu}}%
	\thanks{H.~Bai is with Oklahoma State University, Stillwater, OK 74078, USA.
		{\tt\small he.bai@okstate.edu}}%
	\thanks{J.~George and P.~Sharma are with the U.S. Army Research Laboratory, Adelphi, MD 20783, USA.
		{\tt\small \{jemin.george,piyush.k.sharma\}.civ@mail.mil}}%
}
\begin{document}

	\maketitle
	\thispagestyle{empty}
	\pagestyle{empty}

	\begin{abstract}
		We address the problem of model-free distributed stabilization of heterogeneous multi-agent systems using reinforcement learning (RL). Two algorithms are developed. The first algorithm solves a centralized linear quadratic regulator (LQR) problem without knowing any initial stabilizing gain in advance. The second algorithm builds upon the results of the first algorithm, and extends it to distributed stabilization of multi-agent systems with predefined interaction graphs. Rigorous proofs are provided to show that the proposed algorithms achieve guaranteed convergence if specific conditions hold. A simulation example is presented to demonstrate the theoretical results. 
	\end{abstract}

	\begin{keywords}
		Reinforcement learning, linear quadratic regulator, optimal distributed control, multi-agent systems. 
	\end{keywords}

	\section{Introduction}
Reinforcement learning (RL) is a goal-oriented learning method where a system optimizes an intended policy according to a reward returned from its environment. Because of the generality of the approach, RL has found applications in diverse areas such as robotics~\cite{kober2013reinforcement}, communication~\cite{luong2019applications}, electric power systems \cite{sadamoto}, and defense-related military applications~\cite{barton2018reinforcement}. It has also been shown as a fantastic tool for solving optimal control problems, especially for linear quadratic regulator (LQR) design, when system dynamics are unknown \cite{Kiumarsi17}. A variety of formulations of RL has been proposed in the model-free LQR literature including methods such as adaptive dynamic programming (ADP) \cite{Jiang12}, Q-learning~\cite{Bradtke94,Lamperski20}, and zeroth-order optimization \cite{Fazel18}. Extensions of these centralized designs to distributed RL-based control have been reported in \cite{TCNS,Alemzadeh19,Li19}.

In this paper, we revisit the centralized LQR problem and the distributed stabilization of multi-agent networks with coupled dynamics using model-free RL. In the literature, almost all the RL-based LQR control methods require an initial stabilizing controller to start the learning algorithm even when the plant dynamics is known. In practice, however, due to the uncertainty of dynamics, knowing such initial stabilizing gains may not always be possible. Accordingly, the novelty of our work is to design RL algorithms for generating centralized and distributed stabilizing controllers without knowing explicit system dynamics. The problem of learning centralized stabilizing controllers has been recently addressed in \cite{Lamperski20,Persis19} for discrete-time systems, but the problem for continuous-time systems and distributed stabilization are yet to be studied. 

	
To resolve this issue, we propose  two  off-policy  RL algorithms based on the ADP technique for continuous-time linear systems with unknown dynamics. The first data-driven algorithm solves the centralized LQR problem without having any stabilizing gain as its initial guess. The second algorithm builds on the results of the first algorithm, and extends it to solving distributed stabilization of multi-agent systems with a predefined interaction graph. The fundamental idea is to introduce a damping parameter to the system, which is inspired by \cite{Feng20}. Our design, however, is quite different than \cite{Feng20} as we propose an {\it explicit} updating law for the damping parameter, followed by a rigorous proof of convergence. Moreover, unlike \cite{Feng20}, in our work the distributed stabilizing gain is learned based on a centralized stabilizing gain. The main results are illustrated using a simulation example.
	
This rest of the paper is structured as follows. In Section \ref{sec: LQR}, we introduce the conventional LQR problem and the damping formulation, followed by an off-policy RL algorithm based on ADP for deriving the optimal LQR without having a stabilizing gain in advance. In Section \ref{sec: dist}, we extend this design to a multi-agent stabilizing control problem. In Section \ref{sec: simulation}, a simulation example is presented to illustrate the effectiveness of the two algorithms. In Section \ref{sec: conclusion}, conclusions are drawn. The proof of Theorem \ref{th LQR} is presented in the appendix.
	
	\textbf{Notation}: Throughout the paper, given a matrix $X$, $X\succeq0$ implies that $X$ is positive semi-definite; $X\succ0$ implies that $X$ is positive definite; $\mathbb{S}_+$ is the set of positive definite matrices. For a symmetric matrix $X$, $\lambda_{\max}(X)$ and $\lambda_{\min}(X)$ denote the maximum and minimum eigenvalues of $X$, respectively. Let $\text{vec}(X)\in\mathbb{R}^{n^2}$ denote the vector stacking up columns of $X\in\mathbb{R}^{n\times n}$ from left to right. Let $\diag\{X_1,...,X_N\}$ denote a block diagonal matrix with $X_i$'s on the diagonal. $\mathbb{N}_+$ denotes the set of positive integers. The Euclidean 2-norm is denoted by $||\cdot||$.

	\section{Model-Free LQR without an Initial Stabilizing Controller}\label{sec: LQR}
	
	Consider the continuous-time LQR problem:
	\begin{equation}\label{CT LQR}
		\begin{split}
			\min_{K}~~~~&J(x(0),K)=\int_0^\infty (x^TQx+u^TRu)dt\\
			\text{s.t.}&~~~~\dot{x}(t)=Ax(t)+Bu(t), \\
			&~~~~u(t)=-Kx(t),
		\end{split}
	\end{equation}
	where $x(t)\in\mathbb{R}^n$ and $u(t)\in\mathbb{R}^m$ are the system state and the control input at time $t$, respectively, $A\in\mathbb{R}^{n\times n}$ and $B\in\mathbb{R}^{n\times m}$ are unknown. The variable $K$ is the control gain matrix, which is also referred to as the ``policy". $Q\succeq0$ and $R\succ0$ are both considered to be known. Such a setting is reasonable because $Q$ and $R$ determine the desired transient behavior of the system, thus are usually artificially designed.
	
	We make the following assumption to guarantee the existence and uniqueness of the solution to (\ref{CT LQR}):
	\begin{assumption}\label{as stob}
		$(A,B)$ is stabilizable, $(Q^{1/2},A)$ is observable.
	\end{assumption}
	The problem we aim to solve is formally stated as follows:
	\begin{problem}\label{continuous P}
		Given $Q$, $R$, and a collection of data $\mathcal{D}=\{x(t), u(t), t\in[0,Z\delta t]\}$ with $\delta t>0$ and $Z\in\mathbb{N}_+$ for the linear continuous-time system in (\ref{CT LQR}), find the optimal control gain $K$ for  the LQR problem (\ref{CT LQR}). 
	\end{problem}
	
	\subsection{The Damped LQR Problem and Kleinman's Algorithm}
	To solve (\ref{CT LQR}) without a stabilizing controller, similar to \cite{Feng20}, let us consider the damped problem:
	\begin{equation}\label{damped CT}
		\begin{split}
			\min_{K}~~J(\hat{x}(0),K,\alpha)&=\int_0^\infty e^{-2\alpha t}(\hat{x}^TQ\hat{x}+\hat{u}^TR\hat{u})dt\\
			\text{s.t.}~~~~\dot{\hat{x}}(t)&=A\hat{x}(t)+B\hat{u}(t),\\
			~~~~\hat{u}(t)&=-K\hat{x}(t),
		\end{split}
	\end{equation}
	where $\alpha\geq0$ is the damping parameter. By setting $x(t)=e^{-\alpha t}\hat{x}(t)$ and $u(t)=e^{-\alpha t}\hat{u}(t)$, problem (\ref{damped CT})  is equivalent to the following problem in the sense that with the same control policy and $x(0)=\hat{x}(0)$, they have equivalent cost function values:
	\begin{equation}\label{damped dynamics}
		\begin{split}
			\min_{K}~~~~J(x(0),K,\alpha)&=\int_0^\infty (x^TQx+u^TRu)dt\\
			\text{s.t.}~~~~\dot{x}(t)&=(A-\alpha I)x(t)+Bu(t),\\
			u(t)&=-Kx(t).
		\end{split}
	\end{equation}

	When $A$ and $B$ are known, given $\alpha$ such that $(A-\alpha I_n,B)$ is stabilizable, it is well-known that the solution to (\ref{damped dynamics}) is
	\begin{equation}
		K^*=R^{-1}B^TP,
	\end{equation}
	where $P$ is the solution to the following algebraic Riccati equation:
	\begin{equation}\label{CT Riccati}
		(A-\alpha I)^TP+P(A-\alpha I)+Q-PBR^{-1}B^TP=0.
	\end{equation}
	Given a stabilizing gain $K_0$, Kleinman's algorithm \cite{Kleinman68} shows that equation (\ref{CT Riccati}) can be obtained by alternatively solving for $P_k$ and $K_k$ such that
	\begin{equation}\label{CT Lyapunov}
		(A-\alpha I-BK_k)^TP_k+P_k(A-\alpha I-BK_k)+Q+K_k^TRK_k=0,
	\end{equation}
	\begin{equation}\label{CT K}
		K_{k+1}=R^{-1}B^TP_k, k=0,1,2,....
	\end{equation}
	The updating mechanism (\ref{CT Lyapunov})-(\ref{CT K}) can be viewed as {\it policy iteration}. Let $$V_k(x(0))=x^T(0)P_kx(0)$$ be the cost to go function evaluating the cost in (\ref{damped dynamics}) with $u(t)=K_{k+1}x(t)$ as the controller. A good property of the iteration algorithm is the monotonic decreasing of $P_k$ during the evolution:
	\begin{equation}
		P_{k}\preceq P_{k-1}.
	\end{equation}
	As a result, $K_{k+1}$ is still stabilizing and actually yields a better performance. Solving for $P_k$ from (\ref{CT Lyapunov}) is {\it policy evaluation}, and the updating of $K_{k+1}$ in (\ref{CT K}) is {\it policy improvement}.
	
	\subsection{Model-Free RL for the Damped LQR}
	
	The model-free RL algorithm is to achieve policy iteration without having the explicit information of the dynamics model, while instead, based on control input and state data only.
	
	Let $A_k=A-BK_k$.  Based on system dynamics in (\ref{damped dynamics}), the updating laws of $P_k$ and $K_{k+1}$, we study the value function change for a given policy $K_{k+1}$ during the evolution of the original system in (\ref{CT LQR}) in time interval $[t,t+\delta t]$:
	\begin{equation}\label{integral}
		\begin{split}
			&x^T(t+\delta t)P_kx(t+\delta t)-x^T(t)P_kx(t)\\
			&=\int_t^{t+\delta t}\left[x^T(A_k^TP_k+P_kA_k)x
			+2(u+K_kx)^TB^TP_kx\right]ds\\
			&=\int_t^{t+\delta t}x^T(2\alpha P_k-Q-K_k^TRK_k)xds+\\
			&~~~~~~~~~2\int_t^{t+\delta t}(u+K_kx)^TRK_{k+1}xds,
		\end{split}
	\end{equation}
	where the last equality is obtained based on (\ref{CT Lyapunov}) and (\ref{CT K}). 
	
	Define the mapping $\mu(\cdot): \mathbb{R}^{n}\rightarrow\mathbb{R}^{n(n+1)/2}$ on vector $y=(y_1,...,y_n)^T\in\mathbb{R}^n$ and the mapping $\nu(\cdot): \mathbb{R}^{n\times n}\rightarrow\mathbb{R}^{n(n+1)/2}$ on matrix $X=[X_{ij}]$ such that
	\begin{equation}
		\mu(y)=(y_1^2, y_1y_2, ..., y_1y_n, y_2^2, y_2y_3, ..., y_{n-1}y_n, y_n^2)^T,
	\end{equation}
	\begin{equation}
		\nu(X)=(X_{11},2X_{12},..., 2X_{1n},X_{22}, 2X_{23},..., X_{n-1,n}, X_{nn})^T.
	\end{equation}
	We also let
	\begin{equation}
		\delta_{xx}=\left[\mu(x)|_{0}^{\delta t},...,\mu(x)|_{(Z-1)\delta_t}^{Z\delta t}\right]^T\in\mathbb{R}^{Z\times n(n+1)/2},
	\end{equation}
	\begin{equation}
		I_{x}=\left[\int_0^{\delta t}\mu(x)ds,...,\int_{(Z-1)\delta t}^{Z\delta t}\mu(x)ds\right]^T\in\mathbb{R}^{Z\times n(n+1)/2},
	\end{equation}
	\begin{equation}
		I_{xx}=\left[\int_0^{\delta t}(x\otimes x)ds,...,\int_{(Z-1)\delta t}^{Z\delta t}(x\otimes x)ds\right]^T\in\mathbb{R}^{Z\times n^2},
	\end{equation}
	\begin{equation}
		I_{xu}=\left[\int_0^{\delta t}(x\otimes u)ds,...,\int_{(Z-1)\delta t}^{Z\delta t}(x\otimes u)ds\right]^T\in\mathbb{R}^{Z\times mn}.
	\end{equation}
	Then equation (\ref{integral}) is equivalent to
	\begin{equation}\label{CT LS}
		\Theta_k(\alpha)\begin{pmatrix}
			\nu(P_k)\\ \text{vec}(K_{k+1})
		\end{pmatrix}=\Xi_k,,
	\end{equation}
	where $\Theta_k(\alpha)=(\delta_{xx}-2\alpha I_{x},-2I_{xu}(I_n\otimes R)-2I_{xx}(I_n\otimes K_k^TR))\in\mathbb{R}^{Z\times [n(n+1)/2+mn]}$, $\Xi_k=-I_{x}\nu(Q+K_k^TRK_k)\in\mathbb{R}^{Z}$.
	
	\begin{assumption}\label{as rank}
		$\rank((I_x,I_{xu}))=\frac{n(n+1)}{2}+mn$.
	\end{assumption}

	If the data set $\mathcal{D}$ satisfies Assumption~\ref{as rank}, then solving the least squares problem (\ref{CT LS}) is equivalent to updating $P_k$ and $K_{k+1}$ according to (\ref{CT Lyapunov}) and (\ref{CT K}). Noe that the data set $\mathcal{D}$ can be independent of the control policy $K_k$ to be updated. Hence, the RL algorithm based on solving (\ref{CT LS}) is {\it off-policy}.

	\subsection{RL without Stabilizing Initialization}
	It has been shown in \cite{Jiang12} that under Assumptions \ref{as stob} and \ref{as rank}, with a stabilizing gain at hand, the optimal controller $K^*$ for problem (\ref{CT LQR}) can be obtained by repeatedly solving (\ref{CT LS}) with $\alpha=0$. In this subsection, we propose a model-free approach for seeking $K^*$ without having a stabilizing gain in advance.	
	
	Observe that if $\alpha$ is sufficiently large, $(A-\alpha I,B)$ is always stabilizable. In this work, we will start with a sufficiently large $\alpha_0$ such that $K(0)=\mathbf{0}_{m\times n}$ is stabilizing, i.e., $A-\alpha_0 I_n$ is Hurwitz. Then we decrease $\alpha_k$ until $(A-\alpha_k I,B)$ is not stabilizable. Under Assumption \ref{as stob}, $\alpha_k$ converges to 0 as $k\rightarrow\infty$. Different from \cite{Feng20}, we will propose a model-free approach and specify the updating laws for $\alpha$ and $K$.
	
	The fundamental idea is to update $\alpha_{k+1}$ and $K_{k+1}$ alternatively. During each iteration, we firstly find the minimum $\alpha_{k+1}$ such that $A-BK_k-\alpha_{k+1}I_n$ is stable, then update $K_{k+1}$ such that $A-BK_{k+1}-\alpha_{k+1}I_n$ is stable. The minimum $\alpha_{k+1}$ can be obtained by solving the following optimization:
	\begin{equation}\label{solve alpha}
		\begin{split}
			&\min_{\alpha_{k+1},P_k,K_{k+1}} \alpha_{k+1}\\
		\text{s.t.}~~~~	&\alpha_{k+1},P_k,K_{k+1} ~\text{satisfies}~ (\ref{CT LS}),\\
			&||P_k-P_{k-1}||<\sigma,\\
			&P_k\succ0,\alpha_{k+1}\geq0,
		\end{split}
	\end{equation}
	where $\sigma>0$ is a predefined constant. Here $\sigma$ can be any positive constant. Note that if $\sigma$ is very large, the solution $\alpha_{k+1}$ to (\ref{solve alpha}) may render $A-\alpha_{k+1}I-BK_k$ very close to an unstable matrix, thereby leading to a matrix $P_k$ with very large eigenvalues. Hence, the second constraint in (\ref{solve alpha}) is to restrict $||P_k||$.
	
	The solution to (\ref{solve alpha}) always makes $A-\alpha_{k+1} I_n-BK_k$ stable because the corresponding $P_k$, which is actually the solution to (\ref{CT Lyapunov}) for $\alpha_{k+1}$, is positive definite. Moreover, the corresponding $K_{k+1}$ must be stabilizing for $A-\alpha_{k+1}I_n$ because it is actually the policy improvement result (\ref{CT K}).
	
	Observe that both $\alpha_{k+1}$ and $P_k$ are variables in (\ref{solve alpha}), making the optimization (\ref{solve alpha}) nonlinear. In fact, in our algorithm, we do not need the exactly optimal solution to (\ref{solve alpha}). Hence, we will solve for an approximate optimal solution by using Algorithm \ref{alg:alpha}. The RL algorithm for solving Problem \ref{continuous P} is given in Algorithm \ref{alg:CTARL}, which contains Algorithm \ref{alg:alpha} as the policy improvement step.

	\begin{algorithm}[htbp]
		\small
		\caption{Alternating RL Algorithm for Problem \ref{CT LQR}}\label{alg:CTARL}
		\textbf{Input}: $Q$, $R$, $\mathcal{D}=\{x(t), u(t), t\in[0, Z\delta t]\}$, $\eta=\alpha_0/S$ with $S\in\mathbb{N}_+$,  $\alpha_0>0$, threshold $\epsilon>0$.\\
		\textbf{Output}: $K^*$.
		\begin{itemize}
			\item[1.] Set $k\leftarrow0$, $K_0\leftarrow\mathbf{0}_{m\times n}$, and $P_{-1}\leftarrow\mathbf{0}_{n\times n}$. Compute  $\delta_{xx}$, $I_{xx}$, $I_{xu}$ and $\Xi_k$.
			\item[2.] \textbf{while} $\alpha_k\geq\eta$ 
			
			Obtain $\alpha_{k+1}$, $P_k$ and $K_{k+1}$  by implementing Algorithm \ref{alg:alpha}, set $k\leftarrow k+1$. 
			
			\textbf{end while}
			\item[3.] \textbf{while}
			$\frac{||K_{k+1}-K_k||}{||K_k||}\geq\epsilon$
			
			Set $\alpha_{k+1}=0$. Obtain $P_k$ and $K_{k+1}$  by solving (\ref{CT LS}). 
			
			Set $k\leftarrow k+1$. 
			
			\textbf{end while}
			\item[4.] Set $K^*=K_{k+1}$.
		\end{itemize}
	\end{algorithm}	
	
	\begin{algorithm}[htbp]
		\small
		\caption{Updating $\alpha_{k+1}$, $P_k$, $K_{k+1}$ for Problem \ref{continuous P}}\label{alg:alpha}
		\textbf{Input}: $\delta_{xx}$, $I_{xx}$, $I_{xu}$, $\Xi_k$,  $P=P_{k-1}$, $K=K_k$, $\alpha_k$, $\sigma>0$, step size $\eta=\alpha_0/S$ with $S\in\mathbb{N}_+$.\\
		\textbf{Output}: $\alpha_{k+1}$, $P_k$, and $K_{k+1}$.
		\begin{itemize}
			\item[1.] Set $l=0$.
			
			\item[2.] \textbf{while} $\alpha_{k}\geq\eta$
			
			\item[3.] Set $\alpha_{k+1}\leftarrow\alpha_k-\eta$. Compute $\Theta_k(\alpha_{k+1})$. 
			\item[4.] Obtain $P_k$ and $K_{k+1}$ by solving (\ref{CT LS}).
			\item[5.] \textbf{if} $P_k\succ0$ and $||P_k-P_{k-1}||<\sigma$ 
			
			Set $\alpha_k\leftarrow\alpha_{k+1}$, $P\leftarrow P_k$, $K\leftarrow K_{k+1}$, $l\leftarrow l+1$
			
			\textbf{else}
			
			Go to step 7.
			
			\textbf{end if}
			
			\item[6.] \textbf{end while}
			\item[7.] \textbf{if} $l=0$
			
			Set $\alpha_{k+1}\leftarrow\alpha_{k}$, obtain $P_k$ and $K_{k+1}$ by solving (\ref{CT LS}). 
			
			\textbf{else} 
			
			Set $\alpha_{k+1}\leftarrow\alpha_{k}$, $P_k\leftarrow P$, $K_{k+1}\leftarrow K$.
			
			\textbf{end if}
		\end{itemize}
	\end{algorithm}		
	
	\begin{remark}\label{re alpha=0}
	Since $\eta=\alpha_0/S$ in Algorithm~\ref{alg:alpha}, theoretically $\alpha_k$ converges to 0 if it is decreased by $\eta$ for $S$ times. In numerical simulations, there may be errors when computing $\eta$. As a result, $\alpha_{k}$ may be a small scalar when $\alpha_k<\eta$. However, this error can be avoided by first giving $\eta$, and then selecting a large enough $S$ such that $\alpha_0=S\eta$ renders $A-\alpha_0I_n$ stable. Then for any step $k$, $\alpha_k=0$ if $\alpha_k<\eta$.
	\end{remark}

	\begin{theorem}\label{th LQR}
		Suppose $Q\succ0$. Under Assumptions \ref{as stob}, \ref{as rank}, by implementing Algorithm \ref{alg:CTARL}, there exist $\eta>0$ and $S_0>0$, such that if $S\geq S_0$, then $K_{k}$ converges to the optimal control gain as $k\rightarrow\infty$.
	\end{theorem}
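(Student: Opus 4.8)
The plan is to analyze Algorithm~\ref{alg:CTARL} in its two successive phases — the damping phase (Step~2, which should drive $\alpha_k\downarrow0$ through repeated calls of Algorithm~\ref{alg:alpha}) and the undamped phase (Step~3, $\alpha=0$) — while propagating through both a single invariant: after every \emph{accepted} update, $A-\alpha_kI-BK_k$ is Hurwitz and the returned $P_k\succ0$ is the solution of the associated Lyapunov equation~(\ref{CT Lyapunov}). First I would establish the two facts that keep the invariant self-sustaining. Under Assumption~\ref{as rank}, the least-squares system~(\ref{CT LS}) has a unique solution, and that solution coincides with exactly one Kleinman step~(\ref{CT Lyapunov})--(\ref{CT K}) for the damped model~(\ref{damped dynamics}) with the current damping level — so solving~(\ref{CT LS}) \emph{is} model-free policy evaluation followed by policy improvement. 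Because $Q\succ0$, the Lyapunov solution attached to any Hurwitz closed loop is positive definite, so by Lyapunov's theorem the test ``$P_k\succ0$'' inside Algorithm~\ref{alg:alpha} is \emph{equivalent} to ``$A-\alpha_{k+1}I-BK_k$ is Hurwitz''; whenever it holds, the monotonicity $P_k\preceq P_{k-1}$ of Kleinman's iteration~\cite{Kleinman68} makes $A-\alpha_{k+1}I-BK_{k+1}$ Hurwitz as well. The invariant holds initially because $\alpha_0$ is chosen so that $A-\alpha_0I=A-\alpha_0I-BK_0$ is Hurwitz with $K_0=\mathbf 0$.

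Next I would show Step~2 terminates with $\alpha_k=0$. By Assumption~\ref{as stob}, $(A-\alpha I,B)$ is stabilizable for every $\alpha\in[0,\alpha_0]$ (shift the PBH/Hautus test), and $(Q^{1/2},A-\alpha I)$ is observable since $Q\succ0$; hence the damped Riccati equation~(\ref{CT Riccati}) has a unique $P(\alpha)\succ0$ depending continuously on $\alpha$, so the optimal damped closed loop $A-\alpha I-BR^{-1}B^{T}P(\alpha)$ has a stability margin $m(\alpha)>0$ that is continuous in $\alpha$, and $m_0:=\min_{\alpha\in[0,\alpha_0]}m(\alpha)>0$ by compactness. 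Taking $\eta<m_0$, i.e. $S\geq S_0$ for any $S_0$ with $\alpha_0/S_0<m_0$, I would argue the recursion never stalls: at a fixed level $\alpha_k$, the calls to Algorithm~\ref{alg:alpha} that cannot lower $\alpha$ run ordinary Kleinman iterations (the $l=0$ branch of Step~7), which converge to $K^{*}(\alpha_k)$ with closed-loop margin tending to $m(\alpha_k)\geq m_0>\eta$; after finitely many such iterations the gain stabilizes $A-(\alpha_k-\eta)I$, the ``$P_k\succ0$'' test passes, and $\alpha$ drops by $\eta$. The auxiliary constraint $\|P_k-P_{k-1}\|<\sigma$ is then also satisfied for $\eta$ small, because $\|P(\alpha-\eta)-P(\alpha)\|=O(\eta)$ uniformly on $[0,\alpha_0]$ and only the finitely many levels $\alpha_k\in\{\alpha_0-j\eta\}$ occur. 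Thus $\alpha_k$ is reduced at most $S$ times; Step~2 exits and, by Remark~\ref{re alpha=0}, exits with $\alpha_k=0$, so the invariant yields a gain $K_k$ with $A-BK_k$ Hurwitz — a stabilizing controller for the original plant~(\ref{CT LQR}).

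Finally, Step~3 starts the $\alpha=0$ recursion from that stabilizing gain, and solving~(\ref{CT LS}) with $\alpha=0$ is, again by Assumption~\ref{as rank}, precisely Kleinman's policy iteration for the undamped LQR~(\ref{CT LQR}); under Assumption~\ref{as stob} this converges monotonically to the stabilizing solution $P^{*}$ of the undamped Riccati equation, so $K_k\to K^{*}=R^{-1}B^{T}P^{*}$, the optimal gain (the model-free argument of \cite{Jiang12}, with $\epsilon\to0$ or reading $K_k$ as the generated sequence). Chaining the two phases proves the theorem. I expect the middle step to be the main obstacle: making rigorous that the fixed step $\eta$ is never too large — bounding how many Kleinman iterations at level $\alpha_k$ are needed before the closed-loop margin exceeds $\eta$, while simultaneously keeping the ``$P_k\succ0$'' and $\sigma$-feasibility conditions intact — which is exactly where the uniform margin bound $m_0$ and the continuity of the Riccati and Lyapunov maps carry the weight.
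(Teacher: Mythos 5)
Your proposal is correct and follows essentially the same route as the paper's own proof: both arguments rest on (i) the fixed-$\alpha_k$ inner iteration converging to $K^*(\alpha_k)$ by the standard result of \cite{Jiang12}, (ii) a uniform, compactness-based robustness bound over $\alpha\in[0,\alpha_0]$ that guarantees the resulting gain survives a decrement of $\eta$ once $\eta$ is chosen below that bound (and keeps $\|P_k-P_{k-1}\|<\sigma$ by continuity), and (iii) Step~3 reducing to ordinary Kleinman/ADP convergence from the stabilizing gain obtained at $\alpha=0$. The only cosmetic difference is that you quantify robustness via a uniform stability margin $m_0$ of the optimal damped closed loop, whereas the paper's Lemma~\ref{le delta1} quantifies it as a uniform radius $\delta_1$ around $K^*(\alpha)$ in gain space combined with uniform continuity of $\alpha\mapsto K^*(\alpha)$; the two are interchangeable.
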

	
	\begin{remark}
		In practice, matrix $Q$ in the LQR problem may not satisfy the positive definiteness assumption. In this scenario, we can firstly obtain a stabilizing gain by implementing Algorithm \ref{alg:CTARL} with an artificially designed positive definite $Q'$, then run a conventional ADP algorithm, e.g.,~\cite{Jiang12}, to learn the optimal controller for the original LQR problem.
	\end{remark}

	\section{Learning a Distributed Stabilizing Controller}\label{sec: dist}
	
	Suppose that the system in (\ref{CT LQR}) is a heterogeneous multi-agent system with $N$ agents, where  the system equation for each agent is written as
	\begin{equation}\label{MAS}
		\dot{x}_i=A_{ii}x_i+\sum_{j\in\mathcal{N}_i}A_{ij}x_j+B_iu_i, ~~i=1,...,N,
	\end{equation}
	where $x_i\in\mathbb{R}^{\bar{n}_i}$ and $u_i\in\mathbb{R}^{\bar{m}_i}$ are the state and control input of agent $i$, respectively,  $A_{ij}\in\mathbb{R}^{\bar{n}_i\times\bar{n}_i}$, and $\mathcal{N}_i$ is the set of agents whose dynamics are coupled with agent $i$. Note that different agents may have different dimensions for states and control inputs. 
	
	Define $n=\sum_{i=1}^N\bar{n}_i$ and $m=\sum_{i=1}^N\bar{m}_i$.	Let $x=(x_1^T,...,x_N^T)^T$, $u=(u_1^T,...,u_N^T)^T$,  $A=[A_{ij}]\in\mathbb{R}^{n\times n}$ with $A_{ij}=\mathbf{0}_{\bar{n}_i\times\bar{n}_j}$ for $(i,j)\notin\mathcal{E}$ and $B=\diag\{B_1,...,B_N\}\in\mathbb{R}^{n\times m}$. Then we are able to write the compact system dynamics as
	\begin{equation}
		\dot{x}=Ax+Bu.
	\end{equation}
	
	Let an undirected graph $\mathcal{G}=(\mathcal{V},\mathcal{E})$ denote the communication graph interpreting the desired interaction relationship among agents. Here $\mathcal{V}=\{1,...,N\}$ is the set of agents,  $\mathcal{E}\subset\mathcal{V}\times\mathcal{V}$ is the set of edges specifying those pairs of agents that are able to utilize information of each other. Our goal is to find a structured stabilizing gain matrix $K\in\mathbb{R}^{m\times n}$ for the multi-agent system (\ref{MAS})  such that the controller of agent $i$ determined by $u=-Kx$ involves the state of agent $j$ if and only if $(i,j)\in\mathcal{E}$. In other words, the goal is to find $K$ such that $A-BK$ is stable, and $K\in\mathcal{S}_K(\mathcal{G})$, where the set $\mathcal{S}_K(\mathcal{G})$ is defined as follows:
	\begin{equation}\label{SKG}
		\mathcal{S}_K(\mathcal{G})=\{K\in\mathbb{R}^{m\times n}: K(i,j)=\mathbf{0}_{\bar{m}_i\times\bar{n}_j}~\text{if}~(i,j)\notin\mathcal{E}\},
	\end{equation}
	where $K(i,j)$ is a submatrix of $K$ composed of elements from the $(\sum_{k=1}^{i-1}\bar{m}_k+1)$-th to $\sum_{k=1}^i\bar{m}_k$-th rows and from $(\sum_{l=1}^{j-1}\bar{n}_l+1)$-th to $\sum_{l=1}^j\bar{n}_l$ columns of $K$.
	
	The problem we aim to solve in this section is formally stated as follows:	
	\begin{problem}\label{continuous DP}
		Given a collection of data $\mathcal{D}=\{x(t), u(t),t\in[0,Z\delta t]\}$ with $\delta t>0$ and $Z\in\mathbb{N}_+$ for the linear continuous-time system in (\ref{MAS}), find a control gain matrix $K_d\in\mathcal{S}_K(\mathcal{G})$ such that $A-BK_d$ is stable. 
	\end{problem}
	
	Under Assumptions \ref{as stob} and \ref{as rank}, by artificially designing a cost function in (\ref{CT LQR}) with $Q=I_n$ and $R=I_m$, and implementing Algorithm \ref{alg:CTARL}, we can obtain a stabilizing gain $K_s$ when $\alpha_k$ converges to 0. Next we  show how to obtain a distributed stabilizing gain $K_d$ based on an available stabilizing gain $K_s$.
	
	Similar to (\ref{SKG}), define
	\begin{equation}
	    \mathcal{S}_P(\mathcal{G})=\{P\in\mathbb{R}^{n\times n}: P(i,j)=\mathbf{0}_{\bar{n}_i\times\bar{n}_j}~\text{if}~(i,j)\in\mathcal{E}\},
	\end{equation}
	\begin{equation}
		\mathcal{S}_R=\{R\in\mathbb{R}^{m\times m}: R(i,j)=\mathbf{0}_{\bar{m}_i\times \bar{m}_j}~ \text{if}~ i\neq j\}.
	\end{equation}
	
	The following lemma is the theoretical foundation of our approach.
	
	\begin{lemma}\label{le APd}
		There exists a distributed stabilizing gain $K_d\in\mathcal{S}_K$ for (\ref{MAS}) if the following problem is feasible:
		\begin{equation}\label{APd}
			\begin{split}
				&\text{find} ~~P\\
				\text{s.t.}~~	(A-BK_s)^T&P+P(A-BK_s)\prec0,\\
				P\in&\mathcal{S}_P(\mathcal{G}),~~P\succ0.
			\end{split}	
		\end{equation}
	\end{lemma}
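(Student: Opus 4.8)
The plan is to produce an \emph{explicit} distributed gain directly from any feasible $P$ of (\ref{APd}), using $P$ itself as the closed-loop Lyapunov certificate. Let $P\succ0$ with $P\in\mathcal{S}_P(\mathcal{G})$ solve (\ref{APd}), and put $M:=-\big((A-BK_s)^TP+P(A-BK_s)\big)$, so $M\succ0$. I would take the penalty $R=\epsilon I_m\in\mathcal{S}_R$ with $\epsilon>0$ to be fixed later, and define the candidate
\begin{equation*}
	K_d:=R^{-1}B^TP=\tfrac1\epsilon B^TP .
\end{equation*}
First I would check that the structure is respected: since $B=\diag\{B_1,\dots,B_N\}$ and $R^{-1}$ are block diagonal while $P$ inherits the sparsity of $\mathcal{G}$, the $(i,j)$-block of $K_d$ equals $\epsilon^{-1}B_i^TP(i,j)$, which is zero whenever $(i,j)\notin\mathcal{E}$; hence $K_d\in\mathcal{S}_K(\mathcal{G})$.

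Next I would show that $P$ certifies stability of $A-BK_d$. With $Y:=B^TP$ and $P=P^T$, $R=R^T$, one has
\begin{equation*}
	(A-BK_d)^TP+P(A-BK_d)=A^TP+PA-2\,Y^TR^{-1}Y .
\end{equation*}
A completion of squares, $-2\big(R^{-1/2}Y-\tfrac12R^{1/2}K_s\big)^T\big(R^{-1/2}Y-\tfrac12R^{1/2}K_s\big)\preceq0$, yields $Y^TK_s+K_s^TY-2Y^TR^{-1}Y\preceq\tfrac12K_s^TRK_s$, and since $A^TP+PA-Y^TK_s-K_s^TY=-M$ this gives
\begin{equation*}
	(A-BK_d)^TP+P(A-BK_d)\ \preceq\ -M+\tfrac{\epsilon}{2}K_s^TK_s\ \preceq\ \big(\tfrac{\epsilon}{2}\|K_s\|^2-\lambda_{\min}(M)\big)I .
\end{equation*}
Choosing $\epsilon$ small enough that $\tfrac{\epsilon}{2}\|K_s\|^2<\lambda_{\min}(M)$ (any $\epsilon>0$ if $K_s=\mathbf{0}$) makes the right-hand side negative definite, so by the Lyapunov inequality and $P\succ0$, $A-BK_d$ is Hurwitz; thus $K_d$ is the sought distributed stabilizing gain.

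I expect the only genuinely delicate point to be this scaling. The naive LQR-type choice $K_d=B^TP$ (i.e.\ $\epsilon=1$) need not stabilize, because after the completion of squares the cross term $Y^TK_s+K_s^TY$ is only dominated \emph{up to} the nonnegative remainder $\tfrac12K_s^TRK_s$; shrinking $R$ (equivalently amplifying $K_d$) is precisely what pushes that remainder below the strict margin $\lambda_{\min}(M)>0$ supplied by feasibility of (\ref{APd}). The rest is bookkeeping: confirming that $R^{-1}B^TP$ really lies in $\mathcal{S}_K(\mathcal{G})$ for block-diagonal $R$ (handled above) and that no positive definiteness is lost along the chain of inequalities.
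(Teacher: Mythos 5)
Your proof is correct, and it rests on the same construction as the paper's: take $K_d$ proportional to $R^{-1}B^TP$ for a block-diagonal $R$, note that block-diagonality of $B$ and $R$ together with the sparsity of $P$ force $K_d\in\mathcal{S}_K(\mathcal{G})$, and scale the gain (equivalently, shrink $R$) until the cross term contributed by $K_s$ is dominated by the Lyapunov margin of $P$. Where you differ is in how stability is certified. The paper sets $D=-\big((A-BK_s)^TP_d+P_d(A-BK_s)\big)$, picks $s$ with $sD\succeq K_s^TR'K_s$, defines $Q'=sD-K_s^TR'K_s\succeq0$, and then \emph{invokes} Kleinman's policy-improvement step to assert that $K_d=sR'^{-1}B^TP_d$ is stabilizing; you instead verify the Lyapunov inequality for $A-BK_d$ directly with the same $P$, via the completion of squares $Y^TK_s+K_s^TY-2Y^TR^{-1}Y\preceq\tfrac12K_s^TRK_s$, and choose $\epsilon$ so that the resulting bound $-M+\tfrac{\epsilon}{2}K_s^TK_s$ is strictly negative definite. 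Your route is more self-contained and slightly more careful: the paper's $Q'$ is only positive semidefinite, so its appeal to policy improvement implicitly needs a detectability-type caveat (or a strict choice of $s$), whereas your strict inequality $\tfrac{\epsilon}{2}\|K_s\|^2<\lambda_{\min}(M)$ closes that gap outright, and in fact gives a threshold a factor of two better than the paper's condition (\ref{s}). What the paper's formulation buys in exchange is generality in $R'$ (any block-diagonal positive definite weight, which it reuses in the data-driven SDP and in Algorithm \ref{alg:CTDARL}), while your specialization to $R=\epsilon I_m$ is enough for the pure existence statement of the lemma.
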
	
	\begin{proof}
		Suppose that $P_d$ is a solution to (\ref{APd}). Denote
		$$D=-\left[(A-BK_s)^TP_d+P_d(A-BK_s)\right]\succ0.$$
		For any $R'\succ0$ such that $R'\in\mathcal{S}_R$, choose $s>0$ such that $sD\succeq K_s^TR'K_s$. Then $sP_d$ is still a solution to (\ref{APd}). Let $Q'=sD-K_s^TR'K_s$. Let $K_d=s{R'}^{-1}B^TP_d$. We can view $K_d$ as a policy improvement from the current policy $K_s$ for the following cost function:
		\begin{equation}
			J_k(x(0),u)=\int_0^\infty (x^TQ'x+u^TR'u)dt.
		\end{equation}
		Therefore, $K_d$ is stabilizing as well. Moreover, since $B$ and $R'$ are block-diagonal and $P\in\mathcal{S}_P(\mathcal{G})$, we have $K_d\in\mathcal{S}_K$.
	\end{proof}

	In a model-free way, similar to the last section, (\ref{APd}) can be transformed to the data-based form:
	\begin{equation}\label{Stab LS}
		\begin{split}
			\text{find}~~P_d, D, E&\\
			\text{s.t.}~~~~~	\Theta(K_s)\begin{pmatrix}
				\nu(P_d)\\
				\text{vec}(E)
			\end{pmatrix}&=-I_x\nu(D),\\
			D\succ0, P_d\in\mathcal{S}_P,&P_d\succ0,
		\end{split}
	\end{equation}
	where $\Theta(K_s)=(\delta_{xx},-2I_{xu}(I_n\otimes R)-2I_{xx}(I_n\otimes K_s^TR))\in\mathbb{R}^{Z\times [n(n+1)/2+mn]}$. According to (\ref{CT LS}), we know that for any solution $\{P_d,D,E\}$ to (\ref{Stab LS}), it actually holds that $E=R^{-1}B^TP_d$. Here matrix $R\succ0$ is artificially designed and will be used to compute $\Theta(K_s)$.
	
	Next we transform (\ref{Stab LS}) to a semi-definite program (SDP), by converting positive definite constraints to positive semi-definite constraints. Note that for any solution $\{P_d, D, E\}$ to \eqref{Stab LS}, $\{sP_d,sD,sE\}$ for any $s>0$ is still a solution. That is, once (\ref{APd}) is feasible, there must exist a solution $\{P_d, D, E\}$ to (\ref{APd})  such that $D\succeq cI_n$ and $P_d\succeq cI_n$ for any $c>0$. In practice, there may be a desired range for the traces of the solution matrices. Without loss of generality, we aim to solve the following linear SDP:
	\begin{equation}\label{Stab SDP}
		\begin{split}
			\min_{P_d,D,E}~~\text{trace}(P_d)&\\
			\text{s.t.}~~~~~	\Theta(K_s)\begin{pmatrix}
				\nu(P_d)\\
				\text{vec}(E)
			\end{pmatrix}&=-I_x\nu(D),\\
			D\succeq cI_n,~ P_d\in\mathcal{S}_P(\mathcal{G}),~& P_d\succeq cI_n,
		\end{split}
	\end{equation}
	where $c>0$ is artificially given depending on the requirement in practice.
	
	\begin{lemma}\label{le equi}
		Under Assumption \ref{as rank}, (\ref{APd}) is feasible if and only if (\ref{Stab SDP}) is feasible.
	\end{lemma}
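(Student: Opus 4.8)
The plan is to establish the two implications of the equivalence by exhibiting, in each direction, an explicit correspondence between a feasible triple $(P_d,D,E)$ of the data-driven SDP (\ref{Stab SDP}) and a feasible matrix $P$ of the abstract Lyapunov feasibility problem (\ref{APd}). The engine of both directions is the following data--model dictionary, valid for every symmetric matrix $P$: the linear equality constraint of (\ref{Stab SDP}) (equivalently, of (\ref{Stab LS})) is satisfied by $(P,D,E)$ along the recorded data if and only if $(A-BK_s)^TP+P(A-BK_s)=-D$ and $E=R^{-1}B^TP$. The ``if'' half of this dictionary is unconditional: it is exactly the integral identity (\ref{integral}) specialised to $\alpha=0$ and $K_k=K_s$, with $D$ in the role of $Q+K_s^TRK_s$, written over the $Z$ data windows in $[0,Z\delta t]$ and stacked; it holds along \emph{any} trajectory of (\ref{MAS}). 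The ``only if'' half is where Assumption \ref{as rank} enters: it forces $(D,E)$ to be uniquely determined by $P$, hence to equal the model pair displayed above --- this is precisely the argument already used below (\ref{CT LS}) to conclude $E=R^{-1}B^TP_d$ for solutions of (\ref{Stab LS}).

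Granting the dictionary, the implication ``(\ref{Stab SDP}) feasible $\Rightarrow$ (\ref{APd}) feasible'' is immediate: if $(P_d,D,E)$ is feasible for (\ref{Stab SDP}), then by the ``only if'' half $(A-BK_s)^TP_d+P_d(A-BK_s)=-D\preceq-cI_n\prec0$, and since in addition $P_d\succeq cI_n\succ0$ and $P_d\in\mathcal{S}_P(\mathcal{G})$, the matrix $P_d$ is feasible for (\ref{APd}).

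For the converse, let $P\succ0$ with $P\in\mathcal{S}_P(\mathcal{G})$ solve (\ref{APd}), and set $D:=-\bigl[(A-BK_s)^TP+P(A-BK_s)\bigr]\succ0$ and $E:=R^{-1}B^TP$. By the ``if'' half of the dictionary, $(P,D,E)$ satisfies the equality constraint of (\ref{Stab SDP}). That constraint is homogeneous, so $(sP,sD,sE)$ satisfies it for every $s>0$; moreover scaling by $s>0$ preserves both membership in $\mathcal{S}_P(\mathcal{G})$ and positive definiteness. Choosing $s\ge c/\min\{\lambda_{\min}(P),\lambda_{\min}(D)\}$ gives $sP\succeq cI_n$ and $sD\succeq cI_n$, so $(sP,sD,sE)$ is feasible for (\ref{Stab SDP}); the objective $\text{trace}(\cdot)$ bears only on optimality, not on feasibility.

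The one step requiring care is the ``only if'' half of the dictionary, i.e.\ the uniqueness of $(D,E)$ given $P$: this amounts to showing that the coefficient block $[\,-2I_{xu}(I_n\otimes R)-2I_{xx}(I_n\otimes K_s^TR),\ I_x\,]$ acting on $\bigl(\text{vec}(E),\nu(D)\bigr)$ has full column rank $mn+n(n+1)/2$. One must check that the term $I_{xx}(I_n\otimes K_s^TR)$ does not spoil the rank: the columns of $I_{xx}$ lie in the column span of $I_x$ (both record the same time-integrated quadratic monomials in $x$, related by the fixed duplication map between $x\otimes x$ and $\mu(x)$), so column operations using the $I_x$ block reduce the rank to that of $[\,I_{xu}(I_n\otimes R),\ I_x\,]$, and since $I_n\otimes R$ is invertible this equals $\rank([I_x,I_{xu}])=mn+n(n+1)/2$ by Assumption \ref{as rank}. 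Everything else in the argument is a direct substitution or the elementary scaling above, so this column-rank bookkeeping is the only genuine obstacle; if preferred, it can simply be cited from the off-policy ADP derivation underlying (\ref{CT LS}).
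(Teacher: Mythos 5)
Your proof is correct and follows essentially the same route as the paper's: sufficiency by using Assumption \ref{as rank} to force the data equation to reproduce the model identity $-D=(A-BK_s)^TP_d+P_d(A-BK_s)$ (which the paper delegates to \cite[Lemma 6]{Jiang12} and you derive explicitly via the relation between $I_{xx}$ and $I_x$ and the invertibility of $I_n\otimes R$), and necessity by homogeneous scaling of a solution of (\ref{APd}) to meet the $\succeq cI_n$ constraints. The only difference is that you spell out the column-rank bookkeeping and the explicit choice of the scaling factor $s$, both of which the paper leaves implicit.
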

	\begin{proof}
		Sufficiency. By following similar lines to the proof of \cite[Lemma 6]{Jiang12}, the validity of Assumption \ref{as rank} implies that a solution $\{P_d, D, E\}$ to (\ref{Stab SDP}) must satisfy
		\begin{equation}\label{DkAPk}
			-D=(A-BK_s)^TP_d+P_d(A-BK_s).
		\end{equation}
		Therefore, $P_d$ must be a solution to (\ref{APd}).
		
		The necessity can be proved by noting that for any solution $P_d$ to (\ref{APd}), $sP_d$ with any $s>0$ is still a solution.
	\end{proof}	
	Now we present Algorithm \ref{alg:CTDARL} as the algorithm for learning a distributed stabilizing controller for (\ref{MAS}).
	\begin{algorithm}[htbp]
		\small
		\caption{RL Algorithm for Problem \ref{continuous DP}}\label{alg:CTDARL}
		\textbf{Input}: $\mathcal{D}=\{x(t), u(t), t\in[0,Z\delta t]\}$, $R'\in\mathbb{S}_+\cap\mathcal{S}_R$, $\alpha_0>0$, $c>0$.\\
		\textbf{Output}: $K_d$.
		\begin{itemize}
			\item[1.] Set $Q=I_n$, $R=I_m$. Implement Algorithm \ref{alg:CTARL} until $\alpha_k=0$, obtain a stabilizing gain $K_s$.
			\item[2.] Solve SDP (\ref{Stab SDP}), obtain solution $\{P_d, D, E\}$. Let $K_d=s{R'}^{-1}E$, where 
			\begin{equation}\label{s}
				s\geq\frac{\lambda_{\max}(K_s^TR'K_s)}{\lambda_{\min}(D)}.
			\end{equation}
		\end{itemize}
	\end{algorithm}	
	
	The block-diagonal input matrix $R'$ in Algorithm \ref{alg:CTDARL} can be designed depending on the requirement in practice, which affects the obtained control gain. The simplest way to design $R'$ is setting $R'=I_m$. The coefficient $s$ in Algorithm \ref{alg:CTDARL} makes the resulting $K_d$ stabilizing as long as it satisfies (\ref{s}).
	
	\begin{theorem}\label{th MAS}
		Under Assumption \ref{as rank}, if (\ref{APd}) is feasible, then Algorithm \ref{alg:CTDARL} solves Problem \ref{continuous DP}.
	\end{theorem}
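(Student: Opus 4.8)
The plan is to verify that the two steps of Algorithm~\ref{alg:CTDARL} do what they are meant to, leaning on results already established. For Step~1, note that the artificial choice $Q=I_n\succ0$, $R=I_m\succ0$ together with Assumptions~\ref{as stob} and~\ref{as rank} places us exactly in the hypotheses of Theorem~\ref{th LQR} and Remark~\ref{re alpha=0}; hence, for a sufficiently fine step size ($S\ge S_0$), the damping parameter $\alpha_k$ in Algorithm~\ref{alg:CTARL} is driven to $0$ in finitely many iterations, and — as argued in the discussion following (\ref{solve alpha}), using that each accepted $P_k\succ0$ solves a Lyapunov equation with positive definite right-hand side $Q+K_k^TRK_k$ — the policy $K_s$ available when $\alpha_k=0$ makes $A-BK_s$ Hurwitz. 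Thus Step~1 delivers a legitimate stabilizing gain $K_s$, which is the object appearing in (\ref{APd}) and (\ref{Stab SDP}).

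For Step~2 I would first invoke feasibility: since (\ref{APd}) is feasible by assumption, Lemma~\ref{le equi} (which uses only Assumption~\ref{as rank}) guarantees that the SDP (\ref{Stab SDP}) is feasible, so Step~2 returns a triple $\{P_d,D,E\}$. Next I would recover its model-based content: by the rank argument in the proof of Lemma~\ref{le equi} (following \cite[Lemma~6]{Jiang12}), Assumption~\ref{as rank} forces any solution of the data equation in (\ref{Stab SDP}) to satisfy (\ref{DkAPk}), i.e. $-D=(A-BK_s)^TP_d+P_d(A-BK_s)$, and, applying the same uniqueness to the block corresponding to $E$, $E=R^{-1}B^TP_d=B^TP_d$ since $R=I_m$. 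Combining with the SDP constraints $P_d\in\mathcal S_P(\mathcal G)$, $P_d\succeq cI_n$, $D\succeq cI_n$, we conclude that $P_d$ solves (\ref{APd}) with $D\succ0$, and that $K_d=sR'^{-1}E=sR'^{-1}B^TP_d$.

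It then remains to check that $K_d$ is distributed and stabilizing, which I would carry out along the lines of the proof of Lemma~\ref{le APd}. For the sparsity pattern: since $B=\diag\{B_1,\dots,B_N\}$ and $R'\in\mathcal S_R$ are block diagonal while $P_d\in\mathcal S_P(\mathcal G)$, the $(i,j)$ block of $K_d$ equals $s(R'_i)^{-1}B_i^T$ times the $(i,j)$ block of $P_d$, which vanishes whenever $(i,j)\notin\mathcal E$, so $K_d\in\mathcal S_K(\mathcal G)$. For stability: the bound (\ref{s}) on $s$ together with $D\succeq cI_n$ gives $sD\succeq K_s^TR'K_s$, so $Q':=sD-K_s^TR'K_s\succeq0$; then $sP_d$ is the value matrix of the stabilizing policy $K_s$ for the cost $\int_0^\infty(x^TQ'x+u^TR'u)\,dt$, since $(A-BK_s)^T(sP_d)+(sP_d)(A-BK_s)+Q'+K_s^TR'K_s=0$, and $K_d=R'^{-1}B^T(sP_d)$ is precisely its Kleinman policy-improvement update. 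The standard Lyapunov-difference identity $(A-BK_d)^T(sP_d)+(sP_d)(A-BK_d)=-Q'-(K_s-K_d)^TR'(K_s-K_d)-K_d^TR'K_d\preceq0$ with $sP_d\succ0$ then shows $A-BK_d$ has no eigenvalue in the open right half-plane, and strict stability follows because an imaginary-axis eigenvector would have to lie in the kernel of $sD=Q'+K_s^TR'K_s\succ0$, a contradiction. Hence $K_d\in\mathcal S_K(\mathcal G)$ and $A-BK_d$ is Hurwitz, i.e., $K_d$ solves Problem~\ref{continuous DP}.

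The main obstacle is not any single computation but stitching the prior results together cleanly: establishing that the finite-time exit of the first while-loop of Algorithm~\ref{alg:CTARL} genuinely yields a stabilizing $K_s$ — rather than merely a gain for which the least-squares system (\ref{CT LS}) is solvable — which is precisely where Assumption~\ref{as stob} and the step-size condition of Theorem~\ref{th LQR} are indispensable; and verifying that the purely data-driven SDP (\ref{Stab SDP}) carries the model-based consequence (\ref{DkAPk}), which hinges entirely on the uniqueness supplied by Assumption~\ref{as rank}. Once these two hand-offs are secured, the remaining claims about $K_d$ follow directly from Lemma~\ref{le APd}.
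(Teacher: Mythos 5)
Your proposal is correct and follows essentially the same route as the paper: feasibility of (\ref{Stab SDP}) via Lemma \ref{le equi}, the rank argument forcing $E=B^TP_d$ and (\ref{DkAPk}), the choice of $s$ in (\ref{s}), and the policy-improvement argument of Lemma \ref{le APd} for sparsity and stability. You are in fact more careful than the paper in two spots it leaves implicit — the explicit Lyapunov/imaginary-axis argument establishing strict Hurwitz stability when $Q'=sD-K_s^TR'K_s$ is only positive semidefinite, and the observation that Step 1 additionally needs Assumption \ref{as stob} (via Theorem \ref{th LQR}) to produce the stabilizing $K_s$, which the theorem statement does not list among its hypotheses.
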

	\begin{proof}
		The formula for $s$ in (\ref{s}) ensures $sD\succeq K_s^TR'K_s$. By following similar lines to the proof of \cite[Lemma 6]{Jiang12}, a solution $\{P_d, D, E\}$ to (\ref{Stab SDP})  satisfies $E=R^{-1}B^TP_d=B^TP_d$. Together with the proof of Lemma \ref{le APd}, we obtain that $K_d=s{R'}^{-1}E=s{R'}^{-1}B^TP_d\in\mathcal{S}_K$ is stabilizing.
	\end{proof}

	\begin{remark}
Both Algorithm \ref{alg:CTARL} and	Algorithm \ref{alg:CTDARL} are centralized because both (\ref{CT LS}) and (\ref{Stab SDP}) involve the overall control input $u$ and state $x$ for the system. However, the controller obtained by implementing Algorithm \ref{alg:CTDARL} is distributed, because the controller of each agent only involves its neighbors' state information.
\end{remark}
	
	\section{Simulations}\label{sec: simulation}	
	
	Consider a multi-agent system (\ref{MAS}) with 3 agents. The overall system matrix and control input matrix are given in (\ref{A B}).
	\begin{figure*}
		\begin{equation}\label{A B}
			A=\begin{pmatrix}
				0.48&	0.23	&0.89&	0.86	&0&	0\\
				0.12&	0.07&	0.16&	0.73&	0&	0\\
				0.64&	0.03&	0.57&	0.71&	0.65&	0.30\\
				0.47&	0.16&	0.62&	0.25&	0.13&	0.67\\
				0&	0&	0.40&	0.95&	0.11&	0.63\\
				0&	0&	0.14&	0.69&	0.90&	0.08
			\end{pmatrix}, B=\begin{pmatrix}
				0.37&	0&	0\\
				0.92&	0&	0\\
				0&	0.09&	0\\
				0&	0.52&	0\\
				0&	0&	0.91\\
				0&	0&	0.31
			\end{pmatrix}.
		\end{equation}
	\end{figure*}
	The interaction topology is considered to be consistent with inter-agent dynamics coupling relationship reflected by matrix $A$ shown in Fig. \ref{fig topology}. By setting $\alpha_0=2.46$, $\sigma=100$, $\eta=0.001$, $Q=I_6$, $R=I_3$, the result of implementing Algorithm \ref{alg:CTARL} is shown in Fig. \ref{fig LQR}. Observe that as $\alpha_{k+1}$ evolves, $\lambda_{\max}(P_k)$ may be increased or decreased. Once $\alpha_{k+1}$ converges to 0, $\lambda_{\max}(P_k)$ keeps decreasing, and ultimately converges to $\lambda_{\max}(P^*)$, where $P^*$ is the optimal cost to go matrix for (\ref{CT LQR}). Moreover, $K_k$ asymptotically converges to the optimal control gain, which is as follows:
	\begin{equation}\label{Ks}
		K^*=\begin{pmatrix}
			3.51&	0.86&	3.82&	2.53&	0.62&	0.23\\
			4.36&	0.05&	5.59&	4.34&	1.63&	1.32\\
			1.75&	-0.01&	3.17&	3.09&	2.45&	2.18
		\end{pmatrix}.
	\end{equation}
	
	By taking $K^*$ as the stabilizing gain matrix $K_s$ and implementing Algorithm \ref{alg:CTDARL} with $c=100$ and $R'=I_3$, the obtained distributed stabilizing control gain is
	\begin{equation}
		K_d=\begin{pmatrix}
			139.55&	102.25&	73.54&	33.76&	0&	0\\
			174.73&	-44&   165.15&	142.76&	4.52&	3.97\\
			0&	0&	91.70&	-5.61&	179.35&	91.79
		\end{pmatrix}.
	\end{equation}
	The structure of $K_d$ shows that the controllers of agents 1 and 3 do not involve state information of each other, thus are distributed controllers.
	
	\begin{figure}
		\centering
		\includegraphics[width=6cm]{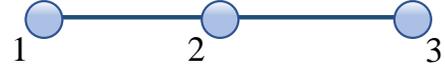}
		\caption{The desired interaction graph.} \label{fig topology}
	\end{figure}
	
	\begin{figure}
		\centering
		\includegraphics[width=9cm]{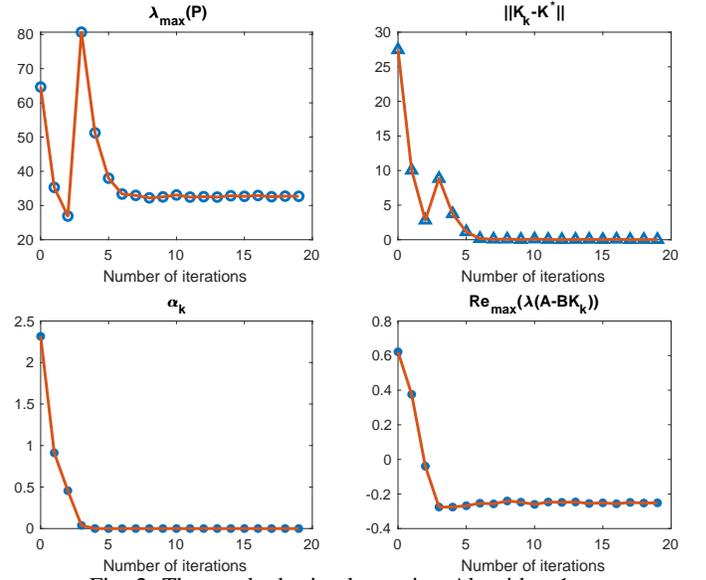}
		\caption{The results by implementing Algorithm \ref{alg:CTARL}.} \label{fig LQR}
	\end{figure}

	
	\section{Conclusions}\label{sec: conclusion}
	
	We have proposed two off-policy model-free RL algorithms for the optimal control of general linear systems and for the distributed stabilization of linear multi-agent systems. By introducing a damping parameter, the RL-based LQR control  methods can be generalized to scenarios where an initial stabilizing gain is not available. Once a centralized stabilizing gain is learned, a distributed stabilizing gain with a desired distributed structure can be learned for a multi-agent system. Though both learning algorithms are currently centralized, we hope to develop a distributed learning scheme to construct the distributed stabilizing controller.

	\section{Appendix: Proof of Theorem \ref{th LQR}}\label{sec:appendix}
	
	Before entering into the proof, we first present a supporting lemma. Let $K^*(\alpha)$ be the optimal control gain for (\ref{damped CT}). The following lemma shows a robustness property of $K^*(\alpha)$.
	\begin{lemma}\label{le delta1}
		Suppose that $Q\succ0$. Given $\alpha_0>0$, there exists $\delta_1>0$, such that for any $\alpha\in[0,\alpha_0]$, if $||K-K^*(\alpha)||<\delta_1$, then $A-\alpha I-BK$ is stable.
	\end{lemma}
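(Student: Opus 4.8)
The plan is to exploit continuity of the Riccati solution with respect to the parameter $\alpha$ on the compact set $[0,\alpha_0]$, together with a uniform robustness margin for stabilizing gains. First I would recall that under Assumption~\ref{as stob} the pair $(A-\alpha I,B)$ is stabilizable for every $\alpha\in[0,\alpha_0]$ (shifting by $-\alpha I$ preserves stabilizability, since it only moves eigenvalues left), and $(Q^{1/2},A-\alpha I)$ is observable because $Q\succ0$ makes $Q^{1/2}$ invertible; hence for each such $\alpha$ the Riccati equation \eqref{CT Riccati} has a unique positive definite solution $P(\alpha)$, and $K^*(\alpha)=R^{-1}B^TP(\alpha)$ renders $A-\alpha I - BK^*(\alpha)$ Hurwitz. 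The map $\alpha\mapsto P(\alpha)$ is continuous (indeed smooth) on $[0,\alpha_0]$ by the implicit function theorem applied to the Riccati map, whose linearization is a Lyapunov operator that is invertible precisely because $A-\alpha I-BK^*(\alpha)$ is Hurwitz; consequently $\alpha\mapsto K^*(\alpha)$ is continuous on the compact interval.

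Next I would produce, for each fixed $\alpha$, a local robustness radius: since $A-\alpha I-BK^*(\alpha)$ is Hurwitz, there is $r(\alpha)>0$ such that $\|K-K^*(\alpha)\|<r(\alpha)$ forces $A-\alpha I-BK$ Hurwitz (eigenvalues depend continuously on matrix entries). The issue is that $r(\alpha)$ a priori shrinks to zero as $\alpha$ varies; the key step is to make the margin uniform. I would do this via a common Lyapunov-type argument: let $P(\alpha)$ solve \eqref{CT Riccati}, so that $(A-\alpha I-BK^*(\alpha))^TP(\alpha)+P(\alpha)(A-\alpha I-BK^*(\alpha)) = -Q - K^*(\alpha)^TRK^*(\alpha)\preceq -Q \preceq -\lambda_{\min}(Q) I$. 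For a perturbed gain $K = K^*(\alpha)+\Delta$, the same quadratic form becomes $-Q-K^*(\alpha)^TRK^*(\alpha) - (\Delta^TB^TP(\alpha)+P(\alpha)B\Delta)$, whose largest eigenvalue is at most $-\lambda_{\min}(Q) + 2\|\Delta\|\,\|B\|\,\|P(\alpha)\|$. By continuity and compactness, $\sup_{\alpha\in[0,\alpha_0]}\|P(\alpha)\| =: \bar p <\infty$, so choosing $\delta_1 < \lambda_{\min}(Q)/(2\|B\|\bar p)$ makes $P(\alpha)$ a valid Lyapunov certificate for $A-\alpha I-BK$ for every $\alpha\in[0,\alpha_0]$ simultaneously, hence $A-\alpha I-BK$ is Hurwitz.

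The main obstacle I anticipate is establishing the uniform bound $\bar p=\sup_{\alpha}\|P(\alpha)\|<\infty$ rigorously — i.e., ruling out blow-up of the Riccati solution as $\alpha$ ranges over $[0,\alpha_0]$. This follows from continuity of $P(\cdot)$ on a compact set once continuity is itself justified, so the real work is the continuity/smoothness claim, which I would settle by the implicit function theorem with the nonsingularity of the Lyapunov operator $X\mapsto (A-\alpha I-BK^*(\alpha))^TX + X(A-\alpha I-BK^*(\alpha))$ coming for free from Hurwitzness of the closed loop. Everything else — stabilizability/observability preservation under the shift, continuity of the spectrum, the Lyapunov perturbation estimate — is routine. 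Note the hypothesis $Q\succ0$ is used in two places: to guarantee observability uniformly in $\alpha$ and to supply the strict term $-\lambda_{\min}(Q)I$ in the Lyapunov inequality that the perturbation must not overcome.
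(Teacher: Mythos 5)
Your proposal is correct and follows essentially the same route as the paper: use the Riccati/Lyapunov solution $P(\alpha)$ of the closed loop at $K^*(\alpha)$ as a Lyapunov certificate for the perturbed gain, invoke continuity of $\alpha\mapsto P(\alpha)$ on the compact interval $[0,\alpha_0]$ to get a uniform bound $\bar p$, and set $\delta_1$ proportional to $\lambda_{\min}(Q)/\bar p$. If anything, your version is slightly more careful than the paper's, which justifies continuity by citation rather than the implicit function theorem and whose displayed perturbation term omits the factor $B$ (the cross terms should be $\Delta K^TB^TP_\alpha+P_\alpha B\Delta K$, so the $\|B\|$ in your threshold is the correct bookkeeping).
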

	\begin{proof}
		For simplicity, we denote $A_\alpha=A-\alpha I_n$, $K^*_\alpha=K^*(\alpha)$. From the definition of $K^*(\alpha)$, there exists $P_\alpha\succ0$, such that
		\begin{equation}
			(A_\alpha-BK^*_\alpha)^TP_\alpha+P_\alpha(A_\alpha-BK^*_\alpha)+Q+{K^*_\alpha}^TRK^*_\alpha=0,
		\end{equation}	
		Let $\Delta K=K^*-K$, we have
		\begin{equation}\label{K stabilizing}
			\begin{split}
				&(A_\alpha-BK)^TP_\alpha+P_\alpha(A_\alpha-BK)\\
				&=(A_\alpha-BK^*_\alpha)^TP_\alpha+P_\alpha(A_\alpha-BK^*_\alpha)+\Delta K^TP_\alpha+P_\alpha\Delta K\\
				&=-Q-{K^*_\alpha}^TRK^*_\alpha+\Delta K^TP_\alpha+P_\alpha\Delta K.
			\end{split}
		\end{equation}
		From \cite[Corollary 2]{Feng20}, $K^*_\alpha$ is continuous on $\alpha$, which implies that $P_\alpha$ is continuous on $\alpha$. As a result, there exists $\lambda_P$ such that
		$$\lambda_{\max}(P_\alpha)\leq \lambda_P, ~~~\alpha\in[0,\alpha_0].$$
		Equation (\ref{K stabilizing}) means that $A_\alpha-BK$ is stable as long as $$\lambda_{\max}(\Delta K^TP_\alpha+P_\alpha\Delta K)<\lambda_{\min}(Q),$$
		which holds if
		\begin{equation}
			||\Delta K||<\frac{\lambda_{\min}(Q)}{2\lambda_P}\triangleq\delta_1,
		\end{equation}
		here $||\cdot||$ is the induced 2-norm.
	\end{proof}

	{\it Proof of Theorem \ref{th LQR}.} The updating mechanism of $\alpha_{k+1}$ in Algorithm \ref{alg:alpha} implies that $\alpha_k$ is nonincreasing in $k$. Together with $\alpha\geq0$, we have $\lim_{k\rightarrow\infty}\alpha_k=\alpha^*$ for some $\alpha^*\geq0$. Note that when $\alpha_k\geq\eta$, $\alpha_{k+1}$ is always obtained by implementing Algorithm \ref{alg:alpha}. It can be observed that the outputs of Algorithm \ref{alg:alpha} always ensure stability of $A_{k}-\alpha_{k}I_n$. However, if $A_{k+1}-(\alpha_k-\eta) I_n$ is not stable, $\alpha_{k+1}$ remains to be $\alpha_k$ while $K_{k+1}$ keeps updating. 
	
	We now prove that $\eta$ can be chosen such that $\alpha_k$ converges to $\alpha^*<\eta$ as $k\rightarrow\infty$. It suffices to show that for any step $k$, if $\alpha_k\geq\eta$, then there always exists $k'>k$ such that $A-(\alpha_k-\eta)I_n-BK_{k'}$ is stable, and based on $\alpha_k-\eta$ and $K_{k'}$, the updated $P_{k'}$ satisfies $||P_{k'}-P_{k'-1}||<\sigma$.
	
	Let $\alpha'=\alpha_k-\eta$. Recall that $K^*_\alpha$ is continuous on $\alpha$, so $K^*_\alpha$ is uniformly continuous for $\alpha\in[0,\alpha_0]$. Then there exists $\delta_2>0$, such that once $|\alpha-\alpha'|<\delta_2$, there must hold that $$||K^*_{\alpha}-K^*_{\alpha'}||<\delta_1/2,$$
	where $\delta_1$ is specified in Lemma \ref{le delta1}.
	
	As shown in \cite{Jiang12}, under Assumption \ref{as rank}, when $\alpha_k$ remains unchanged, and $K_{k}$ is updated by solving (\ref{CT LS}), $K_k$ converges to $K^*_{\alpha_k}$ as $k\rightarrow\infty$. That is, there exists $k'>k$ such that 
	$$||K_{k'}-K^*_{\alpha_k}||<\delta_1/2.$$
	When $\eta<\delta_2$, we have 
	$$||K_{k'}-K^*_{\alpha'}||\leq||K_{k'}-K^*_{\alpha_k}||+||K^*_{\alpha_k}-K^*_{\alpha'}||<\delta_1,$$
	By Lemma \ref{le delta1}, $A-\alpha' I_n-BK_{k'}$ is stable. 
	
	On the other hand, note that the solution $P_k$ to (\ref{CT Lyapunov}) is continuous on $\alpha_{k+1}$ and $K_k$, thus is uniformly continuous for $\alpha\in[0,\alpha_0]$ and $K\in\{K\in\mathbb{R}^{m\times n}: ||K-K^*_\alpha||<\delta_1\}$. As a result, there is $\delta_3>0$, such that if $|\alpha_{k+1}-\alpha_k|<\delta_3$, then $||P_k-P_{k-1}||<\sigma$.
	
	Consequently, once we choose $\eta<\min\{\delta_2,\delta_3\}$ and $S_0\in\mathbb{N}_+$ such that $\alpha_0=\eta S_0$ renders $A-\alpha_0 I_n$ stable, $\alpha_k$ will converge to $\alpha^*<\eta$ at some step $k''>0$, which implies that $\alpha^*=0$. Continue running Step 3 of Algorithm \ref{alg:CTARL}, $K_k$ converges to the optimal control gain $K^*$ as $k\rightarrow\infty$. Moreover, using any integer $S>S_0$ still works since $A-S\eta I_n$ is stable if $A-S_0\eta I_n$ is stable.
	\QEDA

\end{document}